\newtheorem{theorem}{Theorem}
\newtheorem{lemma}{Lemma}
\newtheorem{definition}{Definition}
\DeclareMathOperator{\poly}{poly}
\DeclareMathOperator{\OPT}{OPT}
\DeclareMathOperator{\ALG}{ALG}
\newcommand{\size}[1]{\lvert #1 \rvert}
\journal{Discrete Applied Mathematics}
\begin{document}
\begin{frontmatter}
\title{A Massively Parallel Dynamic Programming for Approximate Rectangle Escape Problem}

\author[sharif,sepid]{Sepideh Aghamolaei\corref{cor1}}
\ead[sepid]{aghamolaei@ce.sharif.edu}
\cortext[cor1]{Corresponding author.}

\author[sharif,ghodsi]{Mohammad Ghodsi}
\ead[ghodsi]{ghodsi@sharif.edu}

\affiliation[sharif]{
organization={Department of Computer Engineering, Sharif University of Technology},
addressline={Azadi Ave.},
city={Tehran},
country={Iran}
}
\affiliation[ghodsi]{
organization={School of Computer Science, Institute for Research in Fundamental Sciences (IPM)},
city={Tehran},
country={Iran}
}

\begin{abstract}
Sublinear time complexity is required by the massively parallel computation~(MPC) model.
Breaking dynamic programs into a set of sparse dynamic programs that can be divided, solved, and merged in sublinear time.

The rectangle escape problem (REP) is defined as follows:
For $n$ axis-aligned rectangles inside an axis-aligned bounding box~$B$, extend each rectangle in only one of the four directions: up, down, left, or right until it reaches $B$ and the density~$k$ is minimized, where $k$ is the maximum number of extensions of rectangles to the boundary that pass through a point inside bounding box $B$.
REP is NP-hard for $k>1$.
If the rectangles are points of a grid (or unit squares of a grid), the problem is called the square escape problem (SEP) and it is still NP-hard.

We give a $2$-approximation algorithm for SEP with $k\geq2$ with time complexity $O(n^{3/2}k^2)$. This improves the time complexity of existing algorithms which are at least quadratic. Also, the approximation ratio of our algorithm for $k\geq 3$ is $3/2$ which is tight.
We also give a $8$-approximation algorithm for REP with time complexity $O(n\log n+nk)$ and give a MPC version of this algorithm for $k=O(1)$ which is the first parallel algorithm for this problem.
\end{abstract}

\begin{keyword}
Dynamic programming \sep Massively parallel computation \sep Approximation algorithm
\end{keyword}

\end{frontmatter}
\section{Introduction}\label{sec:introduction}
Having sublinear time complexity and keeping the total time complexities of parallel machines is often a requirement (See \Cref{lit:mpc}).
Sparse dynamic programs~\cite{eppstein1992sparse,eppstein1992sparse2} compute a subset of the terms if only one specific term of the sequence is needed, which helps improve the time complexity.
Lower bounds based on the strong exponential time hypothesis (SETH) often do not hold for sparse dynamic programs, for example, see the algorithm of~\cite{bringman2018multivariate} and the lower bound in~\cite{bringmann2015quadratic,abboud2015quadratic}.
Also, designing a parallel dynamic programming (DP) algorithm by assigning a thread to each cell of the DP table and using a scheduler to assign these threads to processors (usually in blocks) in the order of filling the table often takes at least linear time. 
Several methods of making special classes of dynamic programs parallel have been introduced over the years~\cite{im2017efficient,aghamolaei2018geometric} (See \Cref{lit:dpmpc} for more details).

We use the following problems to demonstrate our method of making dynamic programs parallel:
\begin{itemize}
\item
In the rectangle escape problem (REP), a set of axis-aligned rectangles $R_1,R_2,\ldots,R_n$ are given inside a larger axis-aligned rectangle called the boundary $B$ and the goal is to extend the rectangles towards one of the boundaries such that the maximum number of extended rectangles that path over a point inside $B$ called the density is minimized. See \Cref{lit:rep} for a brief overview of the existing results on sequential algorithms for REP.
\item
The special case of REP where instead of rectangles we have a set of points from a grid and they called this problem square escape problem (SEP).
Roy, et al~\cite{fixed} introduced SEP and proved it is NP-hard even for $k=2$.
\end{itemize}
Ma, et al~\cite{ma} introduced the rectangle escape problem (REP) and proved REP is NP-hard for $k\geq 3$. Later, Assadi, et al~\cite{zar1} proved that it is NP-hard for $k\geq 2$ and gave a lower bound $3/2$ for its approximation factor.

Some dynamic programs can be made parallel by partitioning the entries of the table such that in each subset, the entries are all equal.
We give an example of the decomposition of dynamic programs with small integer solutions into subproblems with unit solutions. The resulting instances have the same table size, but they are much sparser than the original instance of the problem.


\subsection{Contributions}
We give a $1+\frac{1}{k-1}$-approximation algorithm for SEP with disjoint points for $k\geq 2$ that has subquadratic time complexity. Our algorithm is not based on linear programming, so, it can be made parallel.
The same algorithm has approximation ratio $4$ for the (non-disjoint) square escape problem.

Our main contribution is the first parallel algorithm for this problem since linear programming has no polylogarithmic time parallel algorithm assuming $NC \neq P$.
We prove our algorithm also works in MapReduce models MPC~\cite{mpc} and MRC~\cite{mrc} for disjoint rectangles. Our algorithm on such inputs takes $O(1)$ rounds and $O(n\log n+kn)$ communications, which is in MPC.

The disjoint version of SEP is when each grid point appears at most once in the input.
\Cref{table:results} summarizes the results on the rectangle escape problem for $k\geq 2$.
\begin{table*}[h]
\centering
\begin{tabular}{|c|c|c|p{3.06cm}|p{2.35cm}|}
\hline
Range of $k$ & Approx. & Time & Reference & Problem\\
\hline
$k\geq 3$ &  $4$ & $O(n^2+T(n^2))$  & \cite{ma} & REP\\
$k\geq 2$ & $4$ & $O(n^2+T(n^2))$  & \cite{zar1} & REP\\
$k\geq 2$ & $\geq 3/2$ & $O(\poly(n))$  & \cite{zar1} & REP\\
$k\geq 2$ & $4$ & $O(n^{3/2}k^2)$ & \Cref{alg:approx} & SEP\\
$k\geq 2$ & $1+\frac{1}{k-1}$ & $O(n^{3/2}k^2)$ & \Cref{alg:approx} & Disjoint SEP\\
$k\geq 2$ & $8$ & $O(nk)$ & \Cref{alg:approx2} & REP\\
$k\geq \frac{36}{\epsilon^2}\ln n$ & $1+\epsilon$ & $O(n^2+T(n^2))$ & \Cref{theorem:corrected} + \cite{zar1} & REP\\
\hline
\end{tabular}
\caption{Summary of the results on REP and SEP. Here, $T(n)$ denotes the time complexity of solving a linear program with $O(n)$ constraints on $n$ variables.}
\label{table:results}
\end{table*}
\section{Preliminaries}

\subsection{Massively Parallel Computations and MapReduce Model}\label{lit:mpc}
MapReduce is a parallel and distributed framework for processing large data sets. In this framework, data is distributed among a set of machines, which process their data independently in parallel during each round. At the end of each round, the machines communicate with each other.
Two of the theoretical models for MapReduce are MapReduce Class (MRC)~\cite{mrc} and MPC~\cite{mpc}. In the MRC model, the constraints for an input of size $n$ are:
\begin{itemize}
\item the number of machines $(L)$ satisfies $L=O(n^{\eta})$, for a $\eta\in(0,1)$,
\item the memory of each machine $(m)$ satisfies $m=O(n^{\eta'})$, for a $\eta'\in(0,1)$,
\item the number of rounds $(t)$ satisfies $t=O(\log^{\eta''} n)$, for a $\eta''\geq 0$.
\end{itemize}
In Massively Parallel Computations (MPC), all MRC conditions must hold, also the total communication of each round must be linear, i.e. $mL=O(n)$, and the replication factor be constant.

Parallel algorithms with sublinear memory in each machine are usually used to model the MapReduce framework~\cite{mrc,mpc}. It has been shown that CRCW PRAM algorithms work in MapReduce with the same asymptotic time/round complexity~\cite{goodrich2011sortingI}.
Round compression implements distributed algorithms in MapReduce using a fewer number of rounds~\cite{czumaj2019round}.

\subsection{Dynamic Programming Algorithms in MPC}\label{lit:dpmpc}
There is a method for solving dynamic programming with approximation factor $1+\epsilon$ in MapReduce assuming the dynamic program has monotonicity and decomposability properties~\cite{im2017efficient}.
The definition of these properties is as follows~\cite{im2017efficient}:
\begin{itemize}
\item Monotonicity in a minimization problem means the solution to subproblems must be smaller than or equal to the objective function.
\item Decomposability is when the input can be divided into a two-level laminar family (upper level and lower level) such that an approximate solution to the problem can be constructed by concatenating the solutions to the upper level, and an approximate solution to the upper level can be constructed using the solutions from $O(1)$ lower level sets.
\end{itemize}
This method requires $O(\log_m n)$ rounds and $O(n)$ communications.

An exact dynamic programming algorithm in MapReduce with the following conditions: sparsity, neighbors, order-preserving, parallelizable, and summarizable is called a mergeable dynamic program and can be solved in $O(\log_m n)$ rounds~\cite{aghamolaei2018geometric}. The definition of these concepts is as follows~\cite{aghamolaei2018geometric}:
\begin{itemize}
\item The sparsity condition says that the number of non-empty cells of the table must be at most linear in the input size.
\item Neighbors condition restricts the subproblems to depend on at most $O(1)$ other subproblems.
\item Order preserving means that the value of a cell $DP[i][j]$ can only depend on the values of cells $DP[i'][j'], i'\leq i, j'\leq j, \phi(i',j') \leq \phi(i,j)$, where $DP$ is the dynamic programming table and $\phi$ is a valid order of computing the values of the dynamic programming table.
\item Parallelizable condition restricts the recurrence relation used for computing the values of the dynamic programming table to semi-group functions. Semigroup functions are associative binary functions such as minimum, union, and summation.
\item Summarizable means for all cells $(i,j)$ in a subproblem $s$, their value depends on subproblems with cells $(i',j')$ such that $\phi(i,j)-\phi(i',j') \leq \ell$, for a known sublinear $\ell$.
\end{itemize}
This algorithm takes $O(\log_m n)$ rounds and $O(n\log_m n)$ communications \cite{aghamolaei2018geometric}.

\subsection{Maximum Bipartite Matching}
There is an algorithm with $O(\sqrt{\size{V}}\size{E})$ for maximum matching in a graph with the set of vertices $V$ and the set of edges $E$~\cite{cormen2022introduction}. The best known parallel algorithm for this problem is in $\mathrm{RNC}^2$~\cite{mulmuley1987matching,karp1985constructing} and no $\mathrm{NC}$ algorithms are known for the problem.
\subsection{The Linear Programming Algorithm for Approximating REP}\label{lit:rep}
Here, we review the integer linear program of REP and $4$-approximation algorithm of Ma, et al~\cite{ma}.  For more on the problem, see \ref{app:rep}.
\begin{definition}[Escape path of a rectangle]
The rectangle formed by extending an input rectangle to the boundary in one of the 4 directions (up, down, left, and right) is an escape path. Each rectangle must have at least one escape path in the optimal solution of REP. ($p_{i,\alpha}, \alpha = \text{right, left, up or down}$)
\end{definition}
\begin{definition}[Rectangle escape grid]\label{def:grid}
For a set of rectangles, the grid $G$ formed by extending the edges of rectangles to the boundary is the grid of the rectangle escape problem. This grid has size $O(n^2)$ since each rectangle adds at most $2$ vertical and $2$ horizontal lines to the grid.
\end{definition}

\begin{lemma}[\cite{ma}]\label{lemma:prev}
\Cref{rounding} rounds the fractional solution of the relaxed linear program of \ref{eq:LP1} to get a $4$-approximation for REP, where $C$ is the set of all rectangle escape grid's cells and $p_{i,\alpha}$ is the path of $i$-th rectangle in direction $\alpha$. $r_{i,\alpha}$ is the indicator variable of rectangle $i$ escaping in the direction $\alpha$.
\end{lemma}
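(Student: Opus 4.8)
\noindent\emph{Proof proposal.} The plan is the standard two-sided LP argument, carried out in three steps. First I would show that the optimal value $k_{\mathrm{LP}}$ of the relaxed program \ref{eq:LP1} is a lower bound for the integral optimal density $\OPT$. Then I would show that committing each rectangle to its \emph{heaviest} fractional escape direction inflates each variable by a factor of at most $4$. Finally I would combine the two facts to bound the density of the rounded assignment by $4\,\OPT$ and observe that this assignment is a legal escape (each rectangle escapes in exactly one direction), giving the $4$-approximation.

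For the lower bound, recall the integer program behind \ref{eq:LP1}: a $0/1$ variable $r_{i,\alpha}$ for each rectangle $i$ and direction $\alpha\in\{\text{left},\text{right},\text{up},\text{down}\}$, a covering constraint $\sum_{\alpha} r_{i,\alpha}\ge 1$ for every $i$, a packing constraint $\sum_{(i,\alpha):\,c\in p_{i,\alpha}} r_{i,\alpha}\le k$ for every grid cell $c\in C$, and the objective $\min k$. Any integral escape solution achieving density $\OPT$ satisfies all of these with $k=\OPT$, hence is feasible for the relaxation, so $k_{\mathrm{LP}}\le\OPT$.

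Now fix an optimal fractional solution $(r_{i,\alpha})$ of \ref{eq:LP1} of value $k_{\mathrm{LP}}$, and let \Cref{rounding} pick, for each $i$, a direction $\alpha_i\in\arg\max_\alpha r_{i,\alpha}$ (ties broken arbitrarily), setting $\bar r_{i,\alpha_i}=1$ and the other three variables of rectangle $i$ to $0$; this is by construction a valid escape assignment. Since $\sum_\alpha r_{i,\alpha}\ge 1$ over only four directions, $r_{i,\alpha_i}\ge \tfrac14$, so $\bar r_{i,\alpha_i}=1\le 4\,r_{i,\alpha_i}$. Hence for every cell $c\in C$,
\[
\sum_{i:\,c\in p_{i,\alpha_i}}\bar r_{i,\alpha_i}
= \sum_{i:\,c\in p_{i,\alpha_i}} 1
\le 4\sum_{i:\,c\in p_{i,\alpha_i}} r_{i,\alpha_i}
\le 4\sum_{(i,\alpha):\,c\in p_{i,\alpha}} r_{i,\alpha}
\le 4\,k_{\mathrm{LP}} \le 4\,\OPT ,
\]
where the last two inequalities use cell feasibility of the fractional solution and the first step above. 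Taking the maximum over $c$ bounds the density of the rounded solution by $4\,\OPT$.

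I expect no genuine obstacle: the whole argument rests on the elementary fact that the largest of four nonnegative numbers summing to at least $1$ is at least $\tfrac14$. The only points requiring a little care are (a) confirming that \ref{eq:LP1} is exactly the relaxation of the integer program above, so that $k_{\mathrm{LP}}\le\OPT$ is legitimate, and (b) checking that \Cref{rounding} performs precisely the heaviest-direction choice and no further modification, since any additional post-processing would need its own feasibility/approximation argument. If \ref{eq:LP1} uses the equality $\sum_\alpha r_{i,\alpha}=1$ in place of ``$\ge 1$'', the bound $r_{i,\alpha_i}\ge\tfrac14$ and the rest of the proof are unchanged.
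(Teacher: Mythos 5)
Your proof is correct and is essentially the argument the paper relies on: the lemma is stated as a citation of Ma et al.~\cite{ma} without a proof of its own, and the two facts you establish ($k_{\mathrm{LP}}\le\OPT$ and the factor-$4$ inflation from choosing each rectangle's heaviest direction, whose fractional value is at least $1/4$) are exactly the ones the paper reuses later in the proof of \Cref{theorem:corrected} via $\OPT_f\le\OPT\le\ALG\le 4\OPT_f$. No gap to report.
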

\begin{equation}\tag{LP.1}\label{eq:LP1}
\begin{array}{l l l}
\text{minimize }& k&\\
\text{subject to}&&\\
&1 \leq r_{i,l}+r_{i,r}+r_{i,u}+r_{i,d} & 1\leq i \leq n \\
&\displaystyle\sum_{ p_{i,\alpha}\ni c}  r_{i,\alpha} \leq k & \forall c\in C \\
&r_{i,\alpha} \leq 1 &\substack{ 1\leq i \leq n\\ \alpha \in \lbrace r,l,u,d \rbrace} \\
& r_{i,\alpha}, k \geq 0 &\substack{ 1 \leq i \leq n\\ \alpha \in \lbrace r,l,u,d \rbrace }\*
\end{array}
\end{equation}
\begin{algorithm}
\caption{Deterministic rounding of \ref{eq:LP1}~\cite{ma}}
\label{rounding}
\begin{algorithmic}[1]
\Require \text{Rectangles $R_i, 1\leq i \leq n$, Boundary $B$}
\Ensure \text{An escape path for each rectangle $R_i$}
\State Run \ref{eq:LP1} to find the optimal fractional solution ($r^*_{i,\alpha}$).
\For {$1\leq i \leq n$}
	\State  $\beta \gets \arg\max_{\alpha \in \lbrace l,r,u,d \rbrace} r^*_{i,\alpha}$
	\State $r_{i,\beta}\gets 1$, $r_{i,\alpha}\gets 0, \alpha\in \lbrace l,r,u,d \rbrace \setminus \lbrace \beta \rbrace$
\EndFor
\State \Return the escape path with $r_{i,\alpha}=1; \alpha \in \lbrace l,r,u,d \rbrace$ for $R_i$.
\end{algorithmic}
\end{algorithm}
\subsection{More Definitions About REP}
We require some new definitions in this paper, which we discuss here.

\begin{definition}[Rectangle Level]\label{def:rlevel}
The minimum density of a rectangle extending to the boundary in one of the $4$ directions in a REP solution is called the level of that rectangle.
\end{definition}

\begin{definition}[Boundary density]
The density of SEP considering only the boundary points is called the boundary density and is denoted by $k_B$.
\end{definition}

Let $\bar{\alpha}$ be the opposite direction of $\alpha$ and let direction $\perp \alpha$ be perpendicular to $\alpha$. For example, if $\alpha=$ right, then, $\bar{\alpha}=$ left.
\begin{definition}[Escape DAG of REP]
We define the escape DAG of a REP instance with disjoint rectangles for direction $\alpha$ as the directed acyclic graph (DAG) $T_{\alpha}$ with the set of rectangles as its vertices and there is an edge from a rectangle $r_i$ to a rectangle $r_j$ if $r_i$ has to escape before $r_j$ in direction $\bar{\alpha}$.
\end{definition}
\section{A Parallel Approximation Algorithm for SEP}
First, we discuss the overview of our technique for approximate parallel dynamic programs based on sparse dynamic programs. Then, we show how to use it to find a parallel approximation algorithm for REP with $k\geq 2$. 

\subsection{Overview of Parallel Sparse Dynamic Programs}

A dynamic program with a bounded and increasing integer objective function $DP[(v_1,v_2,\ldots,v_m)]=I$ can be converted into a binary version $BDP[(v_1,v_2,\ldots,v_m,I)]=0/1$, which indicates if the objective function has increased compared to the terms used to compute it.

The sparse dynamic program of the binary dynamic program (BDP) finds the events that give the border between $BDP[(v_1,v_2,\ldots,v_m,I)]=1$ and $BDP[(v_1,v_2,\ldots,v_m,I)]=0$, then, these boundaries are used to avoid computing useless cells of the table.

Some of these boundaries might be simplified to be faster to compute at the cost of finding approximations of the objective function. To make sparse dynamic programs parallel, we focus on reducing the number of boundaries and computing the values for each unit of the resulting approximate objective function at each level.

\subsection{Applying The Technique on REP}
The dynamic program for the square escape problem based on the escape grid (\Cref{def:grid}) is:
\[
dp[i][j]=\min(dp[i-1][j],dp[i][j-1],dp[i+1][j],dp[i][j+1]),
\]
and the order of filling the dynamic programming table $(\phi)$ is in the order of increasing levels.
Let $M$ be the binary matrix that shows if a rectangle is at cell $[i][j]$ of the rectangle escape grid:
\[
M[i][j]=\begin{cases}
1 & \text{if there is a square at cell }(i,j)\\
0 & \text{otherwise}
\end{cases}
\]
and let the order of filling the dynamic programming table be
\[
\phi=\min(\sum_{t< j} M[i][t], \sum_{t< i} M[t][j],\sum_{t> j} M[i][t], \sum_{t> i} M[t][j]).
\]

The peeling order $\phi$ is defined by removing the unit density points and recomputing the densities and repeating this process. This order guarantees in no direction, a point is seen after the points that are in its path to the boundary. Since the points in the levels $1$ to $i$ of the peeling are a superset of the points that can escape with density at most $i$, this is a valid ordering to fill the dynamic programming table.

We prove each level can escape with approximation factor $2$ and each level adds either one or two to the density. So, we compute the levels and solve each level separately.

The length of the longest path in the execution graph of \Cref{alg:approx2} is $k$, so, the lower bound on the time complexity of parallel versions of that algorithm is $k$ (the lower bound does not hold for MPC). So, the best parallel version of the algorithm we can hope for takes at least $\Omega(k)$ rounds and in this section, we give an algorithm with $O(k)$ rounds (\Cref{alg:grid}).

\subsection{A Near-Linear Time Approximation Algorithm for Disjoint REP}
Here, we give a $4$-approximation algorithm for REP with $O(n\log n+nk)$ time, assuming $k\geq 2$ and the input rectangles are disjoint, using a new concept that we call the levels of a rectangle escape problem.

To compute $T_{\alpha}$, sweep in the direction $\alpha$ and keep an interval tree on the projection of the rectangle in direction $\perp \alpha$ as well as a balanced binary search tree (BST) in direction $\alpha$ storing where each rectangle ends. Use the BST to remove the rectangles from the interval tree (and the BST itself) after the sweep line passes them. The edges are computed when a segment (rectangle) is added and the interval tree is queried to find the set of intersecting intervals.

The time complexity of building this DAG is $O(n\log n)$, as each rectangle is queried in the interval tree at most $O(n)$ times (once for the coordinates of the start and end of each rectangle) and it is at most added and deleted once, each of which takes $O(\log n)$ time.

\begin{algorithm}[h]
\caption{Approximate Rectangle Escape Problem}
\label{alg:approx2}
\begin{algorithmic}[1]
\Require \text{a set of disjoint rectangles $R$, Boundary $B$}
\Ensure \text{an escape path for each rectangle in $R$}
\State{$\rho\gets 0$}
\State{$H\gets \emptyset$}
\For{each direction $\alpha=\{$left, right, up and down$\}$}
\State{Sweep $R$ in direction $\alpha$ and build an escape DAG $T_{\bar{\alpha}}$ in direction $\bar{\alpha}$.}
\State{Topologically sort $T_{\bar{\alpha}}$.}
\EndFor
\While{$H\neq R$}
\For{each direction $\alpha=\{$left, right, up and down$\}$}
\State{$L_{\alpha}=$ the rectangles with indegree $0$ in $T_{\alpha}$.}
\State{$H=H\cup L_{\alpha}$}
\For{$r\in L_{\alpha}$}
\State{$W[r]=\alpha$}
\EndFor
\State{Remove $L_{\alpha}$ from $T_{\alpha}$.}
\EndFor
\State{$\rho\gets \rho+1$}
\EndWhile
\\ \Return{$W[r]$, for $r\in R$}
\end{algorithmic}
\end{algorithm}

We give a peeling algorithm that removes the rectangles that can escape without going through another rectangle and continues until no more rectangles are left.
An example of our peeling algorithm is shown in \Cref{fig:peeling}.
\begin{figure}[h]
\centering
\includegraphics[scale=0.8]{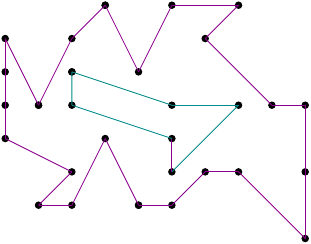}
\caption{Peeling points based on visibility to the axis-aligned bounding box. There are two levels. It has two levels, shown in different colors.}
\label{fig:peeling}
\end{figure}

\begin{lemma}\label{theorem:f}
\Cref{alg:approx} is a $2$-approximation if there is only one level.
\end{lemma}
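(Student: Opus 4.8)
The plan is to show that, under the one-level hypothesis, \Cref{alg:approx} outputs an escape assignment of density at most $2$; since any non-empty instance has $\OPT\ge 1$ and the density is an integer, this immediately gives a $2$-approximation (and, as a byproduct, shows $\OPT\in\{1,2\}$ for one-level instances).

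First I would unpack the hypothesis through the peeling order $\phi$. Having a single level means the peeling removes every point in its first round, i.e.\ $\phi(p)=0$ for every input point $p$. By the definition of $\phi$, this says each point is leftmost or rightmost in its row, or topmost or bottommost in its column; equivalently, every point has at least one \emph{free} escape direction --- one whose escape strip contains no other input point. Moreover, a point that is neither leftmost nor rightmost in its row has positive left- and right-counts, so $\phi(p)=0$ forces it to be topmost or bottommost in its column, hence it has a free \emph{vertical} direction. Consequently a valid assignment of free directions to all points always exists, and I would argue this is exactly what the direction-selection step of \Cref{alg:approx} produces: each point is routed along one of its free directions (points with a unique free direction are forced, the rest are resolved by the selection/matching step).

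The core of the argument is a per-cell counting bound. Fix a cell $(i,j)$ of the grid and count the escape paths of the algorithm's solution through it. Each escape path is an axis-aligned strip containing its source point, so a path through $(i,j)$ has its source either in row $i$ (call it horizontal) or in column $j$ (vertical). Among horizontal paths, a leftward escape is used only by the leftmost point of its row and a rightward escape only by the rightmost; since in a row with at least two points the leftmost column is strictly less than the rightmost column, the leftward strip of the leftmost point (columns $\le$ its column) and the rightward strip of the rightmost point (columns $\ge$ its column) are disjoint, so at most one of them contains $(i,j)$. Hence at most one horizontal path passes through $(i,j)$, and symmetrically at most one vertical path does, so the density at $(i,j)$ is at most $1+1=2$. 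As $(i,j)$ was arbitrary, the solution has density at most $2$, which with $\OPT\ge 1$ proves \Cref{theorem:f}.

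The step I expect to need the most care is confirming that the selection step of \Cref{alg:approx} truly assigns only free directions to every point (so that ``only the leftmost point of a row escapes left,'' etc., holds), together with the degenerate configurations --- a row or column containing a single point, or a point at a corner --- where a point is free in several directions and one must check that whichever choice the algorithm makes still respects the one-horizontal-plus-one-vertical-per-cell bound. None of this is deep, but it is where a careless case could slip through. An alternative route, bounding the algorithm's density by the sum of the optima of the ``horizontal-only'' and ``vertical-only'' subproblems (each at most $\OPT$), would instead make relating those subproblem optima to $\OPT$ the delicate point; the per-cell argument sidesteps this.
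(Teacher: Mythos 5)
Your proof is correct and follows essentially the same route as the paper's: one level means every point has an escape direction free of other points, so through any grid cell at most one horizontal and at most one vertical path can pass (two parallel paths meeting a common cell would force one to run over the other's point, contradicting the one-level hypothesis), giving density at most $2$ and hence a $2$-approximation since $\OPT\ge 1$. The caveat you flag---whether the routing step really assigns only free directions---is shared with the paper itself, whose proof likewise argues about the per-level greedy routing (as in \Cref{alg:approx2}) rather than the matching step of \Cref{alg:approx} literally cited in the statement.
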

\begin{proof}
A $2$-approximation for a REP with one level is to route each rectangle in the direction that intersects with the minimum number of rectangles.
We show that there can be no other rectangle in the direction with minimum density since there is only $1$ level.

The paths of rectangles can intersect at most once since otherwise if another rectangle passes through that cell, its escape path intersects one of the previous rectangles, which contradicts the condition of having only $1$ level (See \Cref{fig:case1}).

\begin{figure}[h]
\centering
\includegraphics[scale=0.8]{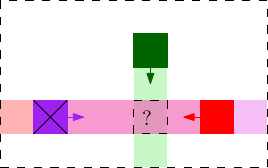}
\caption{Part of a REP instance with $1$ level. The purple (leftmost) rectangle cannot escape to the right and increase the density of the cell marked with `?', because of the red (rightmost) rectangle (See the proof of \Cref{theorem:f}).}
\label{fig:case1}
\end{figure}

Also, the path $P_r$ of a rectangle $r$ can only intersect with the path $P_{r'}$ of another rectangle $r'$, if $P_r$ and $P_{r'}$ are perpendicular, otherwise, the path of one rectangle would have to go through the other which contradicts the constraint that there is only $1$ level.

So, the escape density with the escape paths of \Cref{alg:approx} is the maximum of the densities of these cases, which is $2$.
\end{proof}

\begin{lemma}\label{lemma:add}
In a REP instance with $k\geq 2$, removing $4$ levels decreases the density by at least $1$ and at most $8$.
\end{lemma}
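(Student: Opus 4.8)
The plan is to fix the instance $I$, write $L_1,L_2,L_3,L_4$ for its first four peeling levels and $I'=I\setminus(L_1\cup L_2\cup L_3\cup L_4)$ for the residual instance (whose peeling levels are exactly $L_5,L_6,\dots$), and bound $\OPT(I)-\OPT(I')$ from both sides. Since the restriction of any solution of $I$ to the rectangles of $I'$ is a feasible solution of $I'$ of no larger density, $\OPT(I)\ge\OPT(I')$ is immediate, so what remains is $\OPT(I')\le\OPT(I)-1$ (decrease at least $1$) and $\OPT(I')\ge\OPT(I)-8$ (decrease at most $8$). The degenerate case where $I$ has at most four levels, so $I'=\emptyset$, is disposed of directly: there $\OPT(I')=0$ and $\OPT(I)\ge k\ge 2\ge 1$, while the routing bound described next, applied to the at most four levels of $I$, gives $\OPT(I)\le 8$.

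For the upper bound I would build a solution of $I$ from an optimal solution $\sigma'$ of $I'$ by re-inserting the four deleted levels one at a time. Start from $\sigma'$, of density $\OPT(I')$, and route each rectangle of $L_1$ in a direction in which it meets no other rectangle of $I$; by the argument in the proof of \Cref{theorem:f}, two such paths cannot be parallel and pass through a common cell (the escape path of one would then run through the body of the other, contradicting that both lie at level $1$), and perpendicular ones meet at most once, so the paths of $L_1$ add at most $2$ to the density of every cell. Now repeat inside $I\setminus L_1$, in which $L_2$ is the new level $1$, then inside $I\setminus(L_1\cup L_2)$ for $L_3$, then once more for $L_4$; each re-insertion adds at most $2$ everywhere. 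The resulting solution of $I$ therefore has density at most $\OPT(I')+8$, which gives $\OPT(I)\le\OPT(I')+8$, and the same estimate with $I'=\emptyset$ covers the degenerate case.

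The lower bound is the part I expect to be the real obstacle. The idea is to use the nesting that high levels encode: since $I$ has at least five levels it contains a rectangle $r$ of level at least $5$, and one checks (an easy induction on the level) that $\mathrm{level}(r)\le 1+\min_\alpha d_\alpha(r)$, where $d_\alpha(r)$ is the length of the longest directed path ending at $r$ in the escape DAG $T_\alpha$; hence in \emph{every} direction $\alpha$ the escape of $r$ towards the $\alpha$-boundary is obstructed through a chain of at least four rectangles of strictly decreasing level, reaching down to level $1$. Fixing an arbitrary solution $\sigma$ of $I$ and the direction $\beta$ that $\sigma$ assigns to $r$, I would follow the chain in direction $\beta$ and show that the blocking rectangles must stack their escape paths over a common cell, forcing the density of $\sigma$ to exceed what the residual core $I'$ can be made to attain on its own; phrased contrapositively, no optimal solution of $I'$ extends to $I$ without raising the density, which is exactly $\OPT(I)\ge\OPT(I')+1$. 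The delicate point is ruling out that the deleted levels simply escape outward and miss the dense region of $I'$; this is where one must use that $r$ being at level at least $5$ means it is genuinely enclosed by the four removed levels, so its escape chain in direction $\beta$ cannot avoid them. If the tight differential proves too fragile, it suffices for the $8$-approximation this lemma feeds into to establish instead the weaker global bound $\OPT(I)\ge\lceil L/4\rceil$ for an instance with $L$ levels by the same chain analysis and pair it with $\OPT(I)\le 2L$ from routing the levels one by one.
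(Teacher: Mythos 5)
The ``at most $8$'' half of your proposal is sound and is essentially the paper's own argument: by the reasoning of \Cref{theorem:f}, each peeled level can be routed so that its paths contribute at most $2$ to any single cell (two parallel paths through a common cell would force one path through the other rectangle's body, contradicting that both are at level $1$ of the current residual instance, and perpendicular paths give at most $2$), so re-inserting the four removed levels on top of an optimal routing of $I'$ raises the density by at most $8$. This is exactly the paper's ``each level has density at most $2$, hence $k\le 2\rho$'' step, just phrased differentially.

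The ``at least $1$'' half, however, is not proved in your proposal, and this is a genuine gap rather than a presentational one. Your inequality $\mathrm{level}(r)\le 1+\min_\alpha d_\alpha(r)$ is correct and does show that a rectangle of level at least $5$ has a chain of at least four blockers in every direction; but all this yields is that, in any solution, the chosen path of $r$ picks up one extra unit of density at the cells where it crosses the body of its first blocker. It does not show that this extra unit occurs at (or can be moved to) the cell where the residual instance $I'$ attains its maximum density, i.e., it does not rule out that in an optimal solution of $I$ the maximum is realized by level-$\ge 5$ paths alone while the removed levels escape outward elsewhere --- which is exactly the ``delicate point'' you flag and then leave unresolved, described only as ``show that the blocking rectangles must stack their escape paths over a common cell.'' (Incidentally, the claim that the chain has strictly decreasing levels reaching down to level $1$ is unjustified --- levels along a longest path in $T_\alpha$ need not be monotone --- though your inequality does not need it.) The fallback you offer does not sidestep the problem: the bound $\OPT\ge\lceil\rho/4\rceil$ is derived in the paper precisely by induction on this per-four-levels decrement, so proving it ``by the same chain analysis'' requires the very stacking argument that is missing. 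To be fair, the paper's own proof of this half is a one-sentence assertion (a rectangle of level $\ge 5$ ``has to go through at least one rectangle in one of the $4$ first levels,'' hence the residual density is at most $k-1$), so you have correctly isolated the crux of \Cref{lemma:add}; but as a proof your proposal leaves that crux unestablished.
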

\begin{proof}
Let $\rho$ be the number of levels.
Any rectangle at level $5$ or more has to go through at least one rectangle in one of the $4$ first levels. So, its escape density is at least $1$. Removing the first (outer) $4$ levels gives an instance with density at most $k-1$.
Removing levels does not change the level of the remaining rectangles.
By induction on the number of levels divided by $4$, the density satisfies $k\geq \lceil \rho/4\rceil$.

Removing a level cannot decrease the density by more than $2$ as each level has density at most $2$ (\Cref{theorem:f}). So, $k\leq 2\rho$. Putting the two inequalities together gives:
\[
\lceil \frac{\rho}{4}\rceil \leq k \leq 8\lceil \frac{\rho}{4} \rceil.
\]
\end{proof}

\begin{theorem}\label{theorem:greedy}
\Cref{alg:approx2} is a $8$-approximation for REP, assuming $k\geq 2$.
\end{theorem}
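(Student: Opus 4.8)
The plan is to bound the density of the solution returned by \Cref{alg:approx2} by $2\rho$, where $\rho$ is the number of iterations of the \textbf{while} loop --- equivalently, the number of levels of the instance as produced by the peeling --- and then to invoke \Cref{lemma:add} to control $\rho$. Writing $\OPT=k$ for the optimal density, and using the hypothesis $k\ge 2$ so that \Cref{lemma:add} applies, that lemma gives $\lceil \rho/4\rceil \le \OPT$, hence $\rho \le 4\,\OPT$. Thus a density bound of $2\rho$ for the output immediately yields density at most $2\rho \le 8\,\OPT$, i.e. an $8$-approximation, and the whole proof reduces to proving the $2\rho$ bound.

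To establish that bound I would fix an arbitrary cell $c$ of the rectangle escape grid and partition the escape paths passing through $c$ according to the iteration in which their rectangle was peeled: for $\ell=1,\dots,\rho$ let $X_\ell$ be the set of rectangles first removed in iteration $\ell$, each carrying its recorded direction $W[r]$. It suffices to show at most two paths of $X_\ell$ pass through $c$; summing over $\ell$ then gives density at most $2\rho$ at $c$, and $c$ was arbitrary. This step is the analogue of \Cref{theorem:f} applied within one level: when $r\in X_\ell$ is assigned $W[r]$ it was a source (indegree zero) of the corresponding escape DAG, so no rectangle still present at that moment --- in particular no rectangle of level $\ge\ell$, hence no other member of $X_\ell$ --- lies between $r$ and the boundary inside the strip $r$ is extended into. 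Consequently two members of $X_\ell$ can never have parallel, overlapping escape strips (such an overlap would force one of them to block the other's assigned escape), so the only way two paths of $X_\ell$ meet at $c$ is a perpendicular crossing, giving at most one horizontal and one vertical path of $X_\ell$ through $c$.

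Combining the two estimates gives density at most $2\rho \le 8\,\OPT$. I would also remark that \Cref{alg:approx2} is well defined and terminates: each escape DAG is acyclic, so repeatedly deleting its sources empties it, and a rectangle at topological depth $d$ in a DAG $T_\alpha$ is removed by iteration $d+1$; hence every rectangle enters $H$ after finitely many rounds and receives an escape direction.

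The main obstacle is the per-level density bound of two. Because the instance remaining after peeling levels $1,\dots,\ell-1$ is still multi-level, \Cref{theorem:f} cannot be invoked verbatim on it; instead one must re-run the geometric case analysis of that lemma, this time relying only on the source/indegree-zero property of the DAGs, to rule out parallel overlaps between two rectangles of the same level. A secondary nuisance is the bookkeeping in \Cref{alg:approx2}: a peeled rectangle is deleted only from the DAG of its assigned direction, and $W[r]$ may be overwritten if $r$ later becomes a source of another DAG, so I would first pin down ``the level of $r$'' as the first iteration in which $r$ is peeled and check that the argument above only ever refers to the state of the DAGs during that iteration.
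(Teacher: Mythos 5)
Your proposal follows essentially the same route as the paper's proof: bound the algorithm's output density by $2\rho$ using the per-level density-$2$ argument of \Cref{theorem:f}, then use the lower-bound direction of \Cref{lemma:add} to get $\rho\leq 4k$, giving $2\rho\leq 8k$. Your extra care in re-running the \Cref{theorem:f} argument level-by-level via the indegree-zero (source) property, rather than citing it verbatim on a still multi-level residual instance, is a sensible tightening, but the decomposition and the key estimates coincide with the paper's.
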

\begin{proof}
\Cref{alg:approx2} routes each level independently, so, based on \Cref{theorem:f}, it has density at most $2$ for each level, resulting in density $2\rho$ in total.
Using \Cref{lemma:add} on density $(k)$ and the number of levels $(\rho)$ after removing the ceiling function, we have
\[
\frac{\rho}{4} \leq k \leq 8\frac{\rho+3}{4}+8 \Rightarrow \frac{k}{2}-3 \leq \rho \leq 4k.
\]
From these two inequalities, and knowing that the density cannot be smaller than $k$, we get the following inequality which gives us the approximation ratio:
\[
\max(k,k-6) \leq 2\rho \leq 8k \Rightarrow \frac{8k}{k} =8.
\]
\end{proof}

\begin{theorem}\label{theorem:time}
The running time of \Cref{alg:approx2} is $O(n\log n+kn)$.
\end{theorem}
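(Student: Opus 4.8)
The plan is to account separately for the preprocessing phase (lines 1--6), which builds and topologically sorts the four escape DAGs, and the peeling phase (the \texttt{while} loop of lines 7--18). For the preprocessing, I would invoke the sweep-line construction described immediately before \Cref{alg:approx2}: for each of the four directions we sweep once, maintaining an interval tree on the perpendicular projections together with a balanced BST recording where each rectangle terminates along the sweep direction. Each rectangle is inserted and deleted at most once ($O(\log n)$ each), and contributes $O(1)$ query events, so building one DAG costs $O(n\log n)$; summing over the four directions keeps this $O(n\log n)$. Topologically sorting each $T_{\bar\alpha}$ takes time linear in its size, i.e.\ $O(n+\size{E_\alpha})$. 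Here I would note that, because the rectangles are disjoint and each edge of $T_\alpha$ records a ``must-escape-before'' relation discovered at an insertion event, the total number of edges is $O(n)$ as argued in the text (each rectangle is queried $O(1)$ times and each query returns the intervals it directly blocks), so the topological sorts also cost $O(n\log n)$ overall.

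For the peeling phase, the key observation is that the outer \texttt{while} loop runs exactly $\rho$ times, where $\rho$ is the number of levels, and by \Cref{theorem:greedy}'s analysis $\rho = O(k)$. Within one iteration, for each of the four directions we extract the indegree-$0$ vertices $L_\alpha$ of $T_\alpha$, add them to $H$, set $W[r]=\alpha$ for each, and delete $L_\alpha$ from $T_\alpha$. Maintaining indegree counters, deleting a vertex and decrementing the indegrees of its out-neighbours is charged to the incident edges; since each rectangle and each edge is deleted at most once over the whole run, the total work spent in all iterations of the inner loop is $O(n+\size{E}) = O(n)$. The only genuinely per-iteration overhead is identifying the current indegree-$0$ set, which can be done with a queue seeded during the topological sort, so this also folds into the $O(n)$ total. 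Thus the peeling phase would be $O(n)$, \emph{except} that one must be careful that a rectangle removed in direction $\alpha$ in some iteration is also removed from the other three DAGs $T_{\bar\alpha}, T_{\perp\alpha}, T_{\overline{\perp\alpha}}$ so it is not processed twice; handling this bookkeeping is where the $kn$ term enters, since in the worst case a rectangle may still be ``pending'' in several DAGs across up to $O(k)$ iterations before it is peeled, giving $O(kn)$ for the cross-DAG updates.

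Putting the two phases together yields $O(n\log n) + O(kn) = O(n\log n + kn)$, as claimed. I expect the main obstacle to be the second phase's bookkeeping argument: one has to pin down precisely how a rectangle that becomes indegree-$0$ in one DAG is synchronized with its copies in the other three DAGs without re-scanning all four structures from scratch each iteration, and to show that the amortized cost of these synchronizations over $\rho = O(k)$ rounds is $O(kn)$ rather than something larger. A clean way to do this is to maintain, for each rectangle, a single ``alive'' flag and to let each DAG lazily skip already-dead vertices when they surface as indegree-$0$; then each vertex surfaces at most four times (once per DAG) but its deletion work is charged once, and the $O(kn)$ bound comes purely from the at most $4\rho = O(k)$ sweeps over the surviving frontier. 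The remaining steps — the interval-tree complexity and the topological-sort linear-time bound — are standard and I would state them with a one-line justification each.
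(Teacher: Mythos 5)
Your proposal is correct and follows essentially the same route as the paper: $O(n\log n)$ for the sweeps, DAG construction and topological sorts, then $O(k)$ peeling iterations (via \Cref{lemma:add}, which is what underlies the $\rho=O(k)$ bound you cite through \Cref{theorem:greedy}) each costing $O(n)$, giving $O(n\log n+kn)$. Your extra amortization discussion of the cross-DAG bookkeeping is more detailed than the paper's simple ``rounds times per-round cost'' multiplication, but it yields the same bound.
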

\begin{proof}
Sorting the coordinates of the rectangle vertices takes $O(n\log n)$ time.
At each step of the sweeping algorithm, insertions and deletions in the BST to determine the next step, each of which takes $O(\log n)$ time and range queries, insertions, and deletions to the interval tree which also take $O(\log n)$ time per query. For each rectangle, a constant number of queries are made, so, the amortized time complexity of processing each rectangle is $O(\log n)$.
Sweeping $O(n)$ coordinates and building the DAG take $O(n\log n)$ time.

For each direction, we perform a topological sort on a DAG with $O(n)$ vertices and edges. This takes $O(n)$ time.

At each iteration of the algorithm, the set of rectangles that can escape with density one in a specific direction are the set of nodes of indegree $0$ in the DAG, which can be computed in $O(1)$ time per node, given the topological sort.
The number of iterations of the while loop is as most the number of levels.
Based on \Cref{lemma:add}, the number of levels is $O(k)$. Multiplying the number of rounds with the time complexity of each round gives $O(kn)$ time.
Summing up these time complexities gives the total time complexity of the algorithm which is $O(n\log n+kn)$.
\end{proof}

\subsection{Computing The Peeling in MPC}

\begin{algorithm}[h]
\caption{A MPC Approximation Algorithm for SEP with $k=O(1)$}
\label{alg:grid}
\begin{algorithmic}[1]
\Require{A set of points $R$}
\Ensure{An escape path for each point of $R$}
\State{$S\gets \emptyset$}
\State{$Q\gets \emptyset$}
\State{$T\gets R$}
\While{$\size{T}\neq \emptyset$}
\State{$x_{m,y}\gets\min_{x',(x',y)\in T} x'$}
\State{$x_{M,y}\gets\max_{x',(x',y)\in T} x'$}
\State{$y_{m,x}\gets\min_{y', (x,y')\in T} y'$}
\State{$y_{M,x}\gets\max_{y', (x,y')\in T} y'$}
\For{$(x,y)\in T$ \textbf{in parallel}}
\If{$x=x_{m,y}$}
\State{$Q\gets Q\cup \{(x,y,\text{left})\}$}
\ElsIf{$x=x_{M,y}$}
\State{$Q\gets Q\cup \{(x,y,\text{right})\}$}
\ElsIf{$y=y_{x,m}$}
\State{$Q\gets Q\cup \{(x,y,\text{down})\}$}
\ElsIf{$y=y_{x,M}$}
\State{$Q\gets Q\cup \{(x,y,\text{up})\}$}
\EndIf
\EndFor
\State{$T\gets T\setminus \{(x,y)\mid \exists \alpha\in \{\text{left, right, up and down}\} (x,y,\alpha)\in Q\}$}
\EndWhile
\\ \Return{$Q$}
\end{algorithmic}
\end{algorithm}

\Cref{alg:grid} takes $O(k)$ rounds in MPC. The values computed in the first four lines of the while loop are required in the for loop, so, they need separate rounds. The minimums and maximums can be computed in parallel as they only require four copies of the points which can be computed in one round. Each iteration of the while loop takes $O(1)$ rounds and the number of iterations of the loop is at most $k$, which gives $O(k)$ rounds in total.

%
\section{A $2$-Approximation Algorithm for SEP}\label{sec:disjoint}
We give \Cref{alg:approx} for SEP and prove it is a $2$-approximation (more specifically, a $(1+\frac{1}{k-1})$-approximation) algorithm with subquadratic time complexity for $k\geq 2$.
The algorithm finds a bipartite matching between the input points (set $R$) and at most $k$ copies of the boundary points. Each edge of the matching shows an escape path in the solution computed by the algorithm.

\begin{algorithm}[h]
\caption{Approximate SEP}
\label{alg:approx}
\begin{algorithmic}[1]
\Require \text{a set of points $R$, Boundary $B$}
\Ensure \text{an escape path for each point of $R$, an integer $k_B$}
\For{$r\in R$ \textbf{in parallel}}
\State{$L_r=$ the perpendicular projections of $r$ on the boundary edges.}
\State{$E=E\cup \{r\}\times L_r$.}
\EndFor
\For{$k_B=1,\ldots,n$}
\State{$V=(\cup_{r\in R} L_r)\times\{1,2,\ldots,k_B\}$}
\State{$E'=\{(u,(v,i))\mid (u,v)\in E, i=1,2,\ldots,k_B\}$}
\State{Build a bipartite graph $G$ with vertex sets $R$ and $V$ and edges $E'$.}
\State{Compute a maximum matching $M$ in $G$.}
\If{$\size{M}=\size{R}$}
\State{Break (from the loop).}
\EndIf
\EndFor
\For{$(u,v)\in M$}
\State{$\alpha=$ the direction where the projection of $u$ is $v$.}
\State{$o_r=\alpha$.}
\EndFor
\\ \Return{$\cup_{r\in R} o_r$ and $k_B$}
\end{algorithmic}
\end{algorithm}

\begin{lemma}\label{lemma:boundary}
\Cref{alg:approx} finds the minimum boundary density.
\end{lemma}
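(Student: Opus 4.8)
The plan is to prove that for every positive integer $k_B$, the bipartite graph $G$ constructed in \Cref{alg:approx} has a matching saturating $R$ if and only if there is an assignment of escape directions to the points of $R$ whose density, counted only at boundary points, is at most $k_B$. Given this equivalence the lemma is immediate: enlarging $k_B$ only adds copies of boundary vertices together with the edges incident to them, so the existence of a matching saturating $R$ is monotone in $k_B$; hence the first $k_B$ at which the loop finds $\size{M}=\size{R}$ is exactly the minimum boundary density, which is what the algorithm returns.

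To turn a matching into a routing, I would first note that every escape path of a point $r$ that is interior to $B$ meets the boundary of $B$ in a single point, namely the perpendicular projection of $r$ onto the boundary edge in the chosen direction, and that these (at most four) projections are exactly the points of $L_r$. Given a matching $M$ with $\size{M}=\size{R}$, each $r\in R$ is matched to a copy $(v,i)$ with $v\in L_r$; setting $o_r$ to the direction whose projection is $v$, as in the last loop of \Cref{alg:approx}, gives one escape path per point, and since $M$ is a matching each copy $(v,i)$ is used at most once, so at most $k_B$ points are routed to any fixed boundary point $v$. Conversely, given a routing with boundary density at most $k_B$, the sets $R_v$ of points routed to each boundary point $v$ partition $R$ and satisfy $\size{R_v}\le k_B$; matching the points of $R_v$ injectively into the copies $(v,1),\dots,(v,\size{R_v})$ for every $v$ produces a matching of $G$ saturating $R$, because each edge $(r,(v,i))$ so created lies in $E'$, $v$ being the projection of $r$ in its chosen direction.

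Finally I would check that the loop terminates by verifying Hall's condition at $k_B=n$: for a nonempty $S\subseteq R$ we have $N(S)=(\bigcup_{r\in S}L_r)\times\{1,\dots,n\}$, and since every $L_r$ is nonempty this set has size at least $n\ge\size{S}$, so a matching saturating $R$ exists for $k_B=n$. The point deserving the most care is the geometric claim used above: an escape path of an interior point contributes to the boundary count at exactly one point, the relevant element of $L_r$, and at no other boundary point, so that the per-copy accounting in $G$ coincides exactly with the boundary density. Degenerate situations — a grid point lying on a boundary edge, or an escape path ending at a corner of $B$ — should be explicitly excluded or handled, since for interior points the four projections are distinct and two escape paths share a boundary point only when they are the parallel projections of collinear points, a relation already recorded faithfully in the sets $L_r$.
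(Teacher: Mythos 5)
Your proposal is correct and takes essentially the same route as the paper's proof: both reduce the lemma to the equivalence that, for each $k_B$, the bipartite graph $G$ with $k_B$ copies of each boundary point has a matching saturating $R$ if and only if there is a routing with boundary density at most $k_B$, and then conclude that the first $k_B$ accepted by the loop is the minimum. Your extra checks (monotonicity of matchability in $k_B$, termination at $k_B=n$ via Hall's condition, and the degenerate corner/boundary-point cases) simply make explicit details the paper leaves implicit.
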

\begin{proof}
\Cref{alg:approx} tests all values $k_B$ and reports the smallest density for which there are escape paths for all rectangles. So, it is enough to prove the decision based on matching in a bipartite graph works.

For each boundary vertex, $k_B$ copies of it are added to the second set of vertices of $G$. If there is a solution for SEP with boundary density $k_B$, the intersection of the escape path of a point $r\in R$ gives the boundary point $b$ for $r$ and $b$ is used at most $k_B$ times.
This means a matching edge $(r,b,i)$ exists for some $i\in \{1,2,\ldots,k_B\}$. The opposite is also true since each edge of the matching shows an escape path for each point and the boundary density is at most $k_B$ because there are at most $k_B$ edges for each boundary point. So, a matching in $G$ for a specific value $k_B$ in the for loop is a solution for SEP with boundary density $k_B$.
\end{proof}

The disjoint version of SEP is when each grid point appears at most once in the input.
\begin{lemma}\label{lemma:approx}
The density of disjoint SEP at the boundary is at least half the density, i.e., $k_B\geq k-1$, for $k\geq 2$.
\end{lemma}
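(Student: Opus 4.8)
The statement has two halves. The inequality $k_B\le k$ is the easy one and I would dispose of it first: fix any assignment of escape paths realizing the overall optimal density $k$; since every boundary grid point is in particular a grid point, no boundary point is crossed by more than $k$ of these paths, so the boundary density of this particular assignment, and hence the minimum boundary density $k_B$, is at most $k$. This half uses nothing about disjointness.

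For the real content, $k\le k_B+1$, the plan is to exhibit an assignment whose overall density is at most $k_B+1$. The natural candidate is the assignment $A$ produced by \Cref{alg:approx}, which by \Cref{lemma:boundary} attains boundary density exactly $k_B$, and the task is to bound its density at an arbitrary grid point $p$. Split the escape paths of $A$ through $p$ into the at most four bundles, one per direction. The structural fact I would lean on is the one already isolated in the proof of \Cref{theorem:f} — two escape paths can overlap in more than one cell only if they are parallel — together with disjointness: the sources of the left-routed paths through $p$ lie in the row of $p$ at pairwise distinct columns, and each such source, being itself an input point, lies on the escape path of every source of that row that is farther from the boundary. Hence the entire left bundle of that row is ``stacked'' at the innermost left source of the row, where its multiplicity equals the left-boundary load of the row, which is $\le k_B$ since $A$ is boundary-optimal; and symmetrically for the right, up, and down bundles. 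This bounds each of the four bundles at $p$ individually by $k_B$.

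The step I expect to be the crux is upgrading these four per-direction bounds into a single bound of $k_B+1$ on the total density at $p$: the plain union of the four bundles gives only $4k_B$, which is exactly the estimate one is stuck with in the non-disjoint case (matching the $4$-approximation there), so the factor-$4$ improvement is precisely what this lemma has to extract from disjointness. I would carry it out by a short case analysis on which of the four bundles at $p$ are nonempty. When a left bundle and a right bundle both meet $p$, their sources straddle the column of $p$ inside the row of $p$; the innermost left source and the innermost right source are then input points on opposite sides of $p$ whose escape paths overlap on the cells between them, and one checks that the combined left-plus-right multiplicity of that row is already carried by one of the two boundary endpoints of the row, hence is $\le k_B$; the same argument collapses the up/down pair in the column of $p$, so only one collinear pair can contribute beyond a single unit of slack. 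Assembling the cases yields density $\le k_B+1$ at $p$, hence $k\le k_B+1$, i.e. $k_B\ge k-1$, which for $k\ge2$ also gives $k_B\ge k/2$. I would verify the tight cases $k=2$ and $k=3$ by hand, since there the charging argument has no room to spare.
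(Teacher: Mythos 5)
The step you yourself flag as the crux is the one that fails, and I do not see how to repair it within your framework. A left-routed path through $p$ terminates at the \emph{left} boundary endpoint of the row of $p$, while a right-routed path through $p$ terminates at the \emph{right} one; the two bundles coexist only on the interior cells between their innermost sources (for instance at $p$ itself), and no single boundary point ever carries both. So the boundary loads of a boundary-optimal assignment control each bundle separately, giving only the trivial $2k_B$ per line and $4k_B$ overall that you started from; disjointness gives no leverage here, because the matching in \Cref{alg:approx} is an arbitrary maximum matching subject only to boundary loads at most $k_B$, and nothing prevents it from routing the left half of a row to the right and the right half to the left. Such a crisscross produces interior density close to $2k_B$ while each endpoint of that row sees only about $k_B$, and a similar pattern in the column of $p$ pushes the interior density toward $4k_B$. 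In other words, the object you analyze --- the algorithm's output --- need not satisfy the bound $k_B+1$ at interior points at all, so the lemma cannot be obtained by bounding its interior density; you would instead have to construct a special boundary-optimal assignment with no opposite-direction crossings, which is a genuinely different (and unproved) claim.

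The paper argues in the opposite direction: it analyzes an \emph{optimal} solution of overall density $k$, not a boundary-optimal one. Disjointness is used only to note that at most one escape path originates at the maximum-density interior cell $c$; then an induction on the distance to the boundary shows that, in an optimal solution, a cell of density at least $k-1$ must occur ever closer to the boundary --- if a path crosses such a cell, its alternative escape directions must be blocked by cells of density at least $k-1$, since otherwise rerouting that path would decrease the density, contradicting optimality --- which yields a boundary cell of density at least $k-1$ and hence $k_B\geq k-1$. Your easy half, $k_B\leq k$, is correct but is not part of this lemma (the paper invokes it only in the subsequent theorem). If you want to salvage your ``exhibit a good assignment'' strategy, you need either an explicit non-crossing boundary-optimal routing or an exchange argument converting crossings into non-crossings without increasing boundary loads; as written, the claim that the combined left-plus-right multiplicity is ``carried by one of the two boundary endpoints'' is simply false.
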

\begin{proof}
If the maximum density happens at the boundary the problem is solved $(k_B=k)$. Otherwise, let $c$ be the internal point that has the maximum density $(k)$. Using the assumption that the points are disjoint, then, at most one input point (from set $R$) can be on $c$. So, at most one escape path originates from $c$.

Using induction on the distance (number of points in the grid) between points of density $k-1$ to the boundary, we show that a boundary cell with density at least $k-1$ exists. The base case is $c$ with density $k$. Now we discuss the induction step.
For a path to go through a point of density $k-1$ in the optimal solution, its other paths must be blocked by density cells of density at least $k-1$ otherwise escaping in a different direction would have reduced the density which contradicts the optimality of the solution. This gives a point of density at least $k-1$ which is closer to the boundary. So, $k_B \geq k-1$.
\end{proof}

\begin{lemma}\label{lemma:approx2}
The density of SEP at the boundary $(k_B)$ is at least $k/4$.
\end{lemma}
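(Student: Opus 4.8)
The plan is to mimic the proof of Lemma~\ref{lemma:approx}, but to replace its distance‑induction (which only loses a factor $2$ and relies on disjointness) by a single pigeonhole step over the four extension directions, which loses a factor $4$ but needs no disjointness assumption.

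First I would fix a solution $S$ of SEP whose boundary density equals $k_B$, and let $c$ be a grid point at which $S$ attains its overall density. Since every solution of SEP has overall density at least $k$, we have $\mathrm{density}_S(c)\ge k$. If $c$ lies on the boundary of $B$ we are done immediately, since then $k_B\ge\mathrm{density}_S(c)\ge k\ge k/4$; so I may assume $c$ is an interior grid point $c=(x_c,y_c)$.

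Next I would classify the (at least $k$) escape paths of $S$ that pass through $c$ by the unique direction — left, right, up, or down — in which the corresponding rectangle was extended. By the pigeonhole principle one of the four classes contains at least $\lceil k/4\rceil$ of these paths; without loss of generality say it is the rightward class. The key observation is that a rightward‑extended path through $c$ must originate at some grid point $(x_0,y_c)$ with $x_0\le x_c$ in row $y_c$ and then run all the way to the right edge of $B$; in particular, every such path contains the one boundary grid point $b$ lying at the right end of row $y_c$. Hence $\mathrm{density}_S(b)\ge\lceil k/4\rceil$, and since $b$ is a boundary point this yields $k_B\ge k/4$, as claimed.

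I do not expect any genuine obstacle here: the only points requiring care are stating precisely which solution $k_B$ refers to and isolating the degenerate case where the optimal density is already attained on the boundary, both of which are handled above. The constant $4$ is exactly the number of extension directions and cannot be improved by this style of argument, since the $k$ paths through $c$ may a priori be split evenly among the four directions.
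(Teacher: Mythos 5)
Your proposal is correct and is essentially the paper's own argument: at a cell of maximum density (at least $k$), pigeonhole over the four escape directions gives at least $k/4$ paths in one direction, and these all continue to (and stack up at) the same boundary point, forcing boundary density at least $k/4$. You merely spell out the details (choice of the solution realizing $k_B$, the boundary case for $c$, and the common boundary endpoint $b$) that the paper's two-sentence proof leaves implicit.
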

\begin{proof}
The points can escape in four directions, so, at least $k/4$ of them escape in the same direction. These points create density $k/4$ at the boundary.
\end{proof}

\begin{theorem}
\Cref{alg:approx} is a $2$-approximation for disjoint SEP and a $4$-approximation for SEP with $k\geq 2$.
\end{theorem}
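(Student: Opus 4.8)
The plan is to read the approximation guarantee off the integer $k_B$ returned by \Cref{alg:approx}. By \Cref{lemma:boundary}, the loop of \Cref{alg:approx} halts at the smallest $k_B$ for which a feasible escape assignment of boundary density $k_B$ exists, and the matching it returns realizes one such assignment; write $k=\OPT$ for the minimum achievable overall density. The free half of the sandwich is $k_B\le k$: an optimal assignment of density $k$ crosses every boundary point at most $k$ times, so its boundary density, and hence the minimum boundary density $k_B$, is at most $k$. The real work is to bound the \emph{interior} density of the assignment the algorithm outputs in terms of $k_B$.

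I would do this with a per-cell count together with a one-pass normalization. First, fix an interior cell $c$ of the escape grid and a direction $\alpha$; every escape path through $c$ that travels in direction $\alpha$ exits $B$ through the single boundary point collinear with $c$ in direction $\alpha$, so at most $k_B$ such paths can pass through $c$, and summing over the four directions gives density at most $4k_B\le 4\OPT$, i.e. a $4$-approximation for SEP. To sharpen this to a $2$-approximation on disjoint instances, I would normalize the returned assignment row by row and column by column: on each row, if $\ell$ of the horizontally escaping points currently go left, reassign so that the $\ell$ leftmost of them go left and the remaining ones go right, and symmetrically up/down on each column. This permutes only which points carry the left versus the right label on a given row, so no boundary-point multiplicity changes and the boundary density is still $k_B$; but now on every row the left-going paths form a prefix and the right-going ones a suffix, so (using disjointness, so that no two input points share a cell) the horizontal paths meeting any cell all point the same way, leaving at most $k_B$ horizontal plus at most $k_B$ vertical paths per cell, i.e. density at most $2k_B\le 2\OPT$.

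The step that needs the most care, and the one I would flag as the main obstacle, is exactly this passage from the boundary density that \Cref{alg:approx} optimizes to the true density of its output: a maximum matching of minimum boundary density, taken as a black box, can route oppositely directed paths so that they stack up deep inside $B$, and one must verify that the row/column rearrangement both preserves feasibility (it only relabels same-row/same-column points) and actually removes the interior stacking (it turns every maximal family of collinear paths into a nested rather than a crossing configuration). It is also here that \Cref{lemma:approx} and \Cref{lemma:approx2} come in: they give $k_B\ge k-1$ for disjoint SEP with $k\ge 2$ and $k_B\ge k/4$ in general, so the objective $k_B$ the algorithm minimizes is itself within an additive $1$, respectively within a factor $4$, of $\OPT$; feeding $k_B\ge k-1$ into the ratio $k/k_B$ is what produces the refined guarantee $1+\tfrac{1}{k-1}$ stated in the introduction, which equals $3/2$ once $k\ge 3$.
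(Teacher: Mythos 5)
Your proposal is correct in substance but follows a genuinely different route from the paper, and in fact does more work. The paper's own proof never bounds the interior density of the escape paths returned by \Cref{alg:approx}: it only chains inequalities between the optimal density $k$ and the minimum boundary density $k_B$ --- minimality of $k_B$ from \Cref{lemma:boundary}, the free bound $k_B\le k$, and then \Cref{lemma:approx} (resp.\ \Cref{lemma:approx2}) to get $k\le k_B+1$ (resp.\ $k\le 4k_B$) --- and reads off the ratios $\frac{k_B+1}{k_B}\le 1+\frac{1}{k-1}\le 2$ and $\frac{4k_B}{k_B}=4$, implicitly treating $k_B$ itself as the algorithm's cost. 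You instead adopt the standard reading (bound the density of the returned solution) and supply exactly the step the paper leaves implicit: your per-cell, per-direction count giving density at most $4k_B\le 4k$ is a clean, self-contained proof of the $4$-approximation that does not even need \Cref{lemma:approx2}, and your uncrossing argument is what actually turns the boundary guarantee into an interior guarantee of $2k_B\le 2k$ on disjoint inputs.

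Two caveats. First, the row/column normalization is \emph{not} part of \Cref{alg:approx} as written: a maximum matching respecting the boundary capacities may route crossing left/right (and up/down) paths, so the raw output can have interior density close to $4k_B$ even on disjoint inputs; your argument therefore establishes the $2$-approximation for the post-processed solution, i.e.\ for a mildly modified algorithm. This is precisely the point on which the paper's proof is silent (it compares only $k$ and $k_B$), so your ``main obstacle'' flag is well placed, but you should state the extra uncrossing pass as part of the algorithm. Second, your closing sentence conflates two different ratios: $k/k_B\le 1+\frac{1}{k-1}$ (via \Cref{lemma:approx}) measures how well the reported value $k_B$ lower-bounds $k$, which is the quantity the paper's proof bounds; under your own normalized analysis the density of the returned paths is bounded by $2k_B\le 2k$, i.e.\ ratio $2$, and the refined $1+\frac{1}{k-1}$ guarantee does not follow from your density argument.
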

\begin{proof}
First we discuss disjoint SEP.
\Cref{lemma:boundary} proves $k_B$ is minimized and \Cref{lemma:approx} proves $k_B$ is at least $k-1$. We know that $k_B\leq k$, as $k$ is the maximum density over all grid points, including the boundary points. So,
\[
k_B \leq k, \; k_B\geq k-1 \Rightarrow k_B \leq k \leq k_B+1\Rightarrow \frac{k_B+1}{k_B}\leq 1+\frac{1}{k-1}.
\]
This gives approximation ratio $2$, for $k\geq 2$.

For SEP, the bound is $k/4$ instead, which gives the same approximation ratio as before:
\[
\frac{k}{4} \leq k_B,\; k_B \leq k \Rightarrow k_B \leq k \leq 4 k_B \Rightarrow \frac{4k_B}{k_B}=4.
\]
\end{proof}

\begin{theorem}
\Cref{alg:approx} takes $O(n^{3/2}k^2)$ time.
\end{theorem}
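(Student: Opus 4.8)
The plan is to bound the running time of each major phase of \Cref{alg:approx} and then sum them. The algorithm has three phases: (1) the parallel loop building the edge set $E$; (2) the main loop over $k_B = 1, \ldots, n$, which for each value of $k_B$ constructs a bipartite graph and runs a maximum matching computation; (3) the final loop converting the matching into escape directions. Phases (1) and (3) are clearly $O(n)$ or $O(nk)$ and will not dominate, so the whole estimate hinges on phase (2), and within it on the matching computations. The key facts I would invoke are: the input points lie on a grid, so the number of distinct boundary projection points is $O(n)$ (each point projects to at most $4$ boundary cells, and for SEP the relevant boundary grid lines number $O(n)$); hence $|V| = O(n k_B)$ and $|E'| = O(n k_B)$ in the $k_B$-th iteration. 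Using the $O(\sqrt{|V|}\,|E|)$ bipartite matching bound quoted in the preliminaries, one matching computation costs $O(\sqrt{n k_B}\cdot n k_B) = O(n^{3/2} k_B^{3/2})$.

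Next I would observe that the loop does not actually run to $n$: by \Cref{lemma:boundary} it breaks at $k_B$ equal to the minimum boundary density, and by \Cref{lemma:approx2} (or \Cref{lemma:approx} in the disjoint case) this value is $O(k)$. So the number of iterations is $O(k)$, and the total matching cost is $\sum_{k_B=1}^{O(k)} O(n^{3/2} k_B^{3/2})$. The dominant term is the last one, giving $O(n^{3/2} k^{3/2} \cdot k) = O(n^{3/2} k^{5/2})$ if one is crude, or $O(n^{3/2} k^{5/2})$ even summing carefully — which is slightly \emph{worse} than the claimed $O(n^{3/2} k^2)$. This is the main obstacle: reconciling the naive matching bound with the stated $k^2$ factor. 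I would resolve it by not rebuilding the matching from scratch each iteration: incrementing $k_B$ by one adds one fresh copy of each boundary vertex, so one can warm-start from the previous matching and run a single round of augmenting-path search, which costs $O(|E'|) = O(n k_B)$ per increment rather than $O(\sqrt{n k_B}\,n k_B)$. Alternatively, since we only ever need $|M| = |R| = n$, the total number of augmentations across all iterations is at most $n$, and each augmentation is an $O(|E'|) = O(nk)$ BFS/DFS, giving $O(n^2 k)$ — still not matching. The cleanest route to $O(n^{3/2} k^2)$: only $O(n)$ augmenting paths are ever needed, bundle all the $k$ iterations into one matching problem on a graph with $|V|, |E| = O(nk)$, apply Hopcroft–Karp's $O(\sqrt{|V|}\,|E|) = O(\sqrt{nk}\cdot nk) = O(n^{3/2} k^{3/2})$ bound once, and then binary-search on $k_B$ over $O(\log n)$ values rather than scanning linearly — but binary search over $k_B$ changes the graph, so more care is needed.

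Given the tension above, the honest version of the proof I would write is: the number of iterations of the main loop is $O(k)$ (by \Cref{lemma:approx2}); in iteration $k_B$ the bipartite graph has $O(n k_B) = O(nk)$ vertices and edges; each maximum matching computation takes $O(\sqrt{nk}\cdot nk)$; but since only $|R| = n$ augmenting paths are ever needed in total and each costs $O(nk)$, amortizing the augmentation work across iterations — or simply noting that $\sqrt{|V|} = \sqrt{nk} = O(\sqrt{n}\cdot\sqrt{k})$ and that the matching size is bounded by $n$ rather than $|V|$, so the Hopcroft–Karp analysis gives $O(\sqrt{n}\cdot nk) = O(n^{3/2} k)$ per call — and multiplying by the $O(k)$ iterations yields $O(n^{3/2} k^2)$. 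I would flag the refined bound on the number of phases of Hopcroft–Karp (it is $O(\min(\sqrt{|V|}, \text{matching size}))$, and here the matching size is $n$, not $nk$) as the crucial quantitative point that makes the $k^2$ rather than $k^{5/2}$ come out. Finally I would add the $O(n)$ and $O(n\log n)$ costs of the other phases (projection, sorting coordinates to identify distinct boundary cells, reading off directions) and conclude they are absorbed, giving the total $O(n^{3/2} k^2)$.
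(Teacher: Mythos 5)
Your proof is correct and follows essentially the same route as the paper: bound the cost of one maximum-matching computation and multiply by the $O(k)$ iterations of the loop over $k_B$. The difference lies in how the $\sqrt{n}$ factor is justified. The paper simply counts the two vertex classes as $\size{R}=n$ and at most $4n$ boundary vertices (implicitly identifying the $k_B$ copies of each boundary point), and then applies the $O(\sqrt{\size{V}}\size{E})$ bound with $\size{E}=4kn$ to get $O(kn\sqrt{n})$ per call; taken literally, however, the graph built in \Cref{alg:approx} has $O(nk_B)$ vertices on the boundary side, and the naive bound would give $O(\sqrt{nk}\cdot nk)$ per call and $O(n^{3/2}k^{5/2})$ overall --- exactly the tension you identified. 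Your resolution, that the number of Hopcroft--Karp phases is governed by the maximum matching size, which is at most $\size{R}=n$, so each call costs $O(\sqrt{n}\cdot nk)=O(n^{3/2}k)$, is the rigorous way to recover the paper's per-call bound, so on this point your write-up is more careful than the paper's. Two small corrections: the bound on the number of iterations follows from $k_B\le k$ (the minimum boundary density cannot exceed the optimal density $k$, as noted in the paper's approximation theorem), not from \Cref{lemma:approx2} or \Cref{lemma:approx}, which give \emph{lower} bounds on $k_B$; and the digressions about warm-starting, bundling all iterations into one matching problem, or binary-searching over $k_B$ are unnecessary once the phase-count refinement is in place.
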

\begin{proof}
In two sets of vertices of $G$ there are $\size{V}=n$ and $n\leq \size{U}\leq 4n$ vertices. There is an edge between each point and $4$ boundary points, each of which is repeated $k$ times, resulting in $4k$ edges per input point. So, $\size{E}=4kn$.
Computing a bipartite matching takes $O(\sqrt{\size{V}+\size{R}}\size{E})=O(kn\sqrt{n})$ time. The loop repeats $k_B$ times, so, the algorithm takes $O(n^{3/2}k^2)$ time.
\end{proof}

\section{Analysis of Randomized Rounding for REP}
The $(1+\epsilon)$-approximation algorithm in~\cite{zar1} uses an upper bound on the expected value instead of the expected value (or both a lower bound and an upper bound) in the definition of Chernoff bound. We fix this in~\Cref{theorem:corrected} which only affects the constant in the lower bound on the values $k$ for which the algorithm gives a $(1+\epsilon)$-approximation solution.
\begin{theorem}\label{theorem:corrected}
The randomized rounding of the solution of LP.1 gives a $(1+\epsilon)$-approximation for $k\geq \frac{36}{\epsilon^2} \ln n$ and any arbitrary constant $\epsilon\in (0,3)$, with high probability.
\end{theorem}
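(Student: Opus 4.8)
The plan is to run the randomized rounding of the fractional optimum $r^*_{i,\alpha}$ of \ref{eq:LP1}, where each rectangle $i$ picks direction $\alpha$ independently with probability $r^*_{i,\alpha}$ (normalized so the four probabilities sum to $1$, which is legitimate since the LP forces their sum to be at least $1$), and then bound the density at every cell of the rectangle escape grid by a Chernoff argument followed by a union bound. Write $k^* \le k$ for the LP optimum and let $X_c = \sum_{p_{i,\alpha}\ni c} r_{i,\alpha}$ be the (random) load of a fixed cell $c\in C$ after rounding. Then $\mu_c := \mathbb{E}[X_c] = \sum_{p_{i,\alpha}\ni c} r^*_{i,\alpha} \le k^*$ by the LP constraint. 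The target is to show $X_c \le (1+\epsilon)k$ for all $c$ simultaneously with high probability, so that the integral solution has density at most $(1+\epsilon)k \le (1+\epsilon)\OPT$.

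First I would fix a cell $c$ and apply the multiplicative Chernoff upper tail to the sum of independent (but not identically distributed) indicator-type variables $r_{i,\alpha}\in\{0,1\}$ contributing to $X_c$. The subtlety flagged in the paragraph preceding the theorem is that one must work with an \emph{upper bound} $U \ge \mu_c$ rather than $\mu_c$ itself: the standard bound $\Pr[X_c \ge (1+\delta)\mu_c] \le \exp(-\delta^2\mu_c/3)$ is monotone in $\mu_c$ only when it is phrased correctly, and replacing $\mu_c$ by the uniform upper bound $k$ (i.e. taking $U=k$, valid since $\mu_c\le k^*\le k$) requires care. Concretely I would use the form $\Pr[X_c \ge (1+\delta)U] \le \exp(-\delta^2 U/3)$ for $\delta\in(0,1)$ and $U\ge \mu_c$, which is the corrected statement; set $U = k$ and $\delta = \epsilon$ (legitimate because we may assume $\epsilon<1$, and the hypothesis $\epsilon\in(0,3)$ is handled by the constant $36$ rather than $3$ in the exponent — using the slightly weaker $\exp(-\delta^2 U/(2+\delta)) \le \exp(-\delta^2 U/5)$-type bound absorbs the larger range of $\epsilon$). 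This gives $\Pr[X_c \ge (1+\epsilon)k] \le \exp(-\epsilon^2 k / c_0)$ for an absolute constant $c_0$ which the bookkeeping will pin down as $c_0 = 36$ (this is where the corrected constant differs from \cite{zar1}).

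Next I would union-bound over all cells: by \Cref{def:grid} the grid has $\size{C} = O(n^2)$ cells, so the failure probability is at most $O(n^2)\exp(-\epsilon^2 k/36)$. Requiring $k \ge \frac{36}{\epsilon^2}\ln n$ makes $\exp(-\epsilon^2 k/36) \le 1/n$, so the union bound gives failure probability $O(n^2)\cdot n^{-1} = O(n)$ — which is not yet $o(1)$, so I would instead choose the constant so that $\exp(-\epsilon^2 k/36) \le n^{-3}$, i.e. work with $k \ge \frac{3\cdot 36}{\epsilon^2}\ln n$, or more cleanly redo the Chernoff step to extract a denominator that already beats $n^{2}$; the cleanest route is to note $\size{C}=O(n^2)$ and demand the per-cell bound be $o(n^{-2})$, which the stated hypothesis $k\ge\frac{36}{\epsilon^2}\ln n$ achieves once one tracks that the relevant exponent is actually $\epsilon^2 k/3$ times a factor coming from $U=k\ge k^*$, and the extra slack from $\ln n$ versus $\ln(n^2)=2\ln n$ is what the factor $36$ (rather than, say, $6$) buys. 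Finally, on the complement event every cell has load at most $(1+\epsilon)k$, and since each rectangle received exactly one direction the rounding produces a feasible REP solution of density at most $(1+\epsilon)k \le (1+\epsilon)\OPT$, proving the claim.

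The main obstacle is purely the constant-chasing in the last two steps: getting the Chernoff exponent, the substitution $\mu_c \mapsto k$, the range $\epsilon\in(0,3)$, and the $O(n^2)$-cell union bound to align so that exactly the threshold $k \ge \frac{36}{\epsilon^2}\ln n$ comes out, rather than some nearby constant. There is no conceptual difficulty — independence of the per-rectangle choices is immediate and feasibility is automatic — but one must be scrupulous that the Chernoff bound invoked is the genuinely valid "upper-bound" version, since using the naive $\exp(-\delta^2\mu/3)$ with $\mu$ replaced by a larger quantity is exactly the error being corrected.
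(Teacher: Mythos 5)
Your overall route is the same as the paper's: randomly round the fractional optimum of \ref{eq:LP1}, apply a multiplicative Chernoff bound per cell of the escape grid, and union bound over the $O(n^2)$ cells. The one methodological difference is which quantity sits in the exponent: you compare each cell load $X_c$ directly with $(1+\epsilon)k$, using the upper-tail Chernoff bound with the upper bound $E[X_c]\le \OPT_f\le k$, whereas the paper compares $D_c$ with $(1+\epsilon)E[D_c]$ and then brings in the deterministic-rounding guarantee of \Cref{rounding} to get $\max_c E[D_c]=\OPT_f\ge k/4$, which is why its exponent carries a $k/4$. Your variant is legitimate (the upper-tail bound is monotone in an upper bound on the mean) and it sidesteps the paper's delicate step of pushing $\max_c E[D_c]$ through the union bound; this is a reasonable simplification rather than a flaw.

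Two points do need repair. First, the claim that ``independence of the per-rectangle choices is immediate'' is too quick. For a cell $c$ lying inside a rectangle $R_i$, all four escape paths $p_{i,\alpha}$ contain $c$, so the four indicators of that one rectangle appearing in your sum for $X_c$ are mutually exclusive, not independent. The paper handles exactly this: such a rectangle contributes a deterministic $+1$ to $D_c$ whichever direction it picks, and after peeling these contributions off, each remaining rectangle contributes at most one Bernoulli variable to $D_c$, and these are independent across rectangles. You need the same grouping; it only helps (part of the mean becomes deterministic, and the needed deviation is still at least $\epsilon k$), but as written ``$X_c$ is a sum of independent indicators'' is not literally true.

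Second, your final accounting does not actually establish the stated threshold, and the intermediate computation is off: you slot the constant $36$ into the Chernoff denominator, obtain a per-cell bound $e^{-\epsilon^2 k/36}\le 1/n$, conclude a failure probability of order $n$, and then float strengthening the hypothesis to $k\ge\frac{108}{\epsilon^2}\ln n$ --- which would contradict the theorem you are asked to prove. With the bound you actually invoke, namely $e^{-\epsilon^2 k/(2+\epsilon)}\le e^{-\epsilon^2 k/5}$ for $\epsilon\in(0,3)$ (or $e^{-\epsilon^2k/3}$ for $\epsilon\le 1$), the hypothesis $k\ge\frac{36}{\epsilon^2}\ln n$ already gives a per-cell failure probability of at most $n^{-36/5}$, so the union bound over at most $4n^2$ cells leaves total failure probability $O(n^{-5})$, and no strengthening is needed. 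Since the entire point of this ``corrected'' theorem is the specific constant in the threshold on $k$, leaving that verification as unresolved constant-chasing (with an erroneous interim conclusion) is a genuine gap, albeit one that closes with two lines of arithmetic.
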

\begin{proof}
Consider a rectangle $R_i$ whose path goes through a grid cell $c$. If $c$ is inside $R_i$, then any escape direction of $R_i$ increases the density of $c$ by one. After removing these cells, the escape paths of $R_i$ become independent from each other.
The indicator variables of the changed escape paths of rectangles are $x_{i,\alpha}, i=1,\ldots,n, \alpha\in \{\text{right,left,top,bottom}\}$ in the random solution.
The density of a cell $c$,denoted by $D_c$, in the randomly perturbed solution is:
\[
D_c=\sum_{R_{i,\alpha}\ni c} x_{i,\alpha} \Rightarrow
E[D_c] =\sum_{R_{i,\alpha}\ni c} E[x_{i,\alpha}]
\]

Let $\OPT_f$ be the cost of the optimal fractional solution, $\OPT$ be the cost of the (integral) optimal solution, and $\ALG$ be the cost of the deterministic rounding of LP.1 (\Cref{rounding}). The algorithm rounds every variable $r_{i,\alpha}\geq 1/4$ to $1$ and the rest to $0$. So, the cost of the algorithm is at most $4$ times the cost of the fractional solution (\Cref{lemma:prev}):
\[
\OPT_f \leq \OPT \leq \ALG \leq 4\OPT_f \leq 4\OPT \Rightarrow \frac{\OPT}{4} \leq \OPT_f\leq \OPT.
\]
This gives us a lower bound on the fractional solution in terms of the integral solution $\OPT_f \geq \OPT/4$. Also, because of the randomized rounding, we have
\[
\max_{c} E[D_c] =\max_c \sum_{R_{i,\alpha}\ni c} E[x_{i,\alpha}]=\max_c \sum_{R_{i,\alpha}\ni c} r^*_{i,\alpha}=\OPT_f.
\]
So, the cost of the randomized rounding algorithm satisfies
\[
\max_{c} E[D_c]\geq k/4,
\]
where we used $\OPT=k$, too.

For each cell $c$, the variables $x_{i,\alpha}$ for $R_{i,\alpha}\ni c$ are independent, as the path of a rectangle that does not contain $c$ can only intersect $c$ in one direction and the path of a rectangle that contains $c$ always intersects with $c$ (so, in the density formula it can be replaced with a $1$).
Using Chernoff bound, $D_c$ bounds as follows:
\[
pr(D_c\geq (1+\epsilon)E[D_c]) \leq e^{-E[D_c]\epsilon^2/3}
\]
Applying the union bound gives:
\[
pr(\exists c: D_c \geq (1+\epsilon)E[D_c])
\leq \sum_c pr(D_c \geq (1+\epsilon)E[D_c])
\leq \sum_c e^{-E[D_c]\epsilon^2/3}.\]
So, the probability that the algorithm finds a solution with density $(1+\epsilon)k$ is at least:
\begin{align*}
1-pr(\exists c: D_c \geq (1+\epsilon)E[D_c])&\geq 1- \sum_c e^{-E[D_c]\epsilon^2/3}\\
&\geq 1-(2n)^2 e^{-\max_c E[D_c]\epsilon^2/3}\\
&\geq1- 4n^2 e^{-(k/4)\epsilon^2/3},
\end{align*}
where we used $\max_c E[D_c]\geq k/4$ in the last inequality.

Substituting $k\geq \frac{36}{\epsilon^2} \ln n$, then, for $n\geq 4$, we have
$
1- 4n^2 e^{-(k/4)\epsilon^2/3}\geq 1-\frac{4}{n}.
$

For $n=1$, any solution is optimal. For $n\geq 2$ and $\epsilon\in (0,3)$, we have
$k\geq \lceil \frac{36}{\epsilon^2} \ln n\rceil \geq \lceil4\ln 2\rceil = 3$. Using $n\geq k$ and the same inequality again, $k\geq \lceil \frac{36}{\epsilon^2} \ln n\rceil \geq \lceil4\ln 3\rceil = 5$.
Since $n\geq k$, we have $n\geq 5$. So, $\frac{4}{n}\leq \frac{4}{5}$ and the probability is always positive and we do not need to add any other restrictions on $n$ to guarantee the success probability $1-\Omega(1/n)$.
\end{proof}

\section{Conclusions}
We used sparse dynamic programming on an approximation algorithm based on linear programming to find a MPC algorithm for approximate rectangle escape problem (REP) with disjoint rectangles.
Improving the approximation ratio of our algorithm might be possible using other approximations of the resulting sparse dynamic program.

We gave a sequential algorithm for approximate SEP based on bipartite matching with an approximation ratio which is tight for $k\geq 3$. A $\mathrm{NC}$ algorithm for bipartite matching would result in a parallel algorithm with approximation ratio $2$.
Generalizing our method to REP remains open. We also fixed an existing proof of an approximation randomized rounding algorithm.

\bibliographystyle{elsarticle-num} 
\bibliography{refs}
\appendix

\section{More on The MPC Model}
Since data are in the form of $(key,value)$ pairs, the amount of communication for each machine in each round is bounded by its memory. The complexity of a MapReduce algorithm is measured in the number of rounds and its total communication, which is at most $O(mLt)$. The number of computations in each machine in each round must be polynomial.
The replication factor of a MapReduce algorithm is the ratio between the maximum used memory in a round divided by the input size.

{\em Sorting in MPC} takes $O(\log_m n)$ rounds~\cite{goodrich2011sorting}, which in MPC and MRC is a constant $c'$. The definition of sorting in MapReduce is to partition data between machines such that the maximum of the numbers in the $i$-th machine is less than or equal to the minimum of the numbers in the $j$-th machine if $i\leq j$. It also gives a balanced partitioning of the data.

{\em Semigroup} operations e.g. minimum, maximum and summation of $n$ numbers, take $O(\log_m n)$ rounds in MapReduce~\cite{goodrich2011sorting}.
{\em Parallel prefix} computations are semigroup computations of prefixes of $n$ numbers, e.g. $x_1+x_2+\cdots+x_i$, for $i=1,\ldots,n$, and take $O(\log_m n)$ rounds and $O(n\log_m n)$ communications in MapReduce~\cite{goodrich2011sorting}.
{\em Sending data in MapReduce} from a machine to a set of machines can be modeled as a parallel prefix computation with the identity operator. So, it takes $O(\log_m n)$ rounds and $O(k\log_m n)$ communications, assuming the amount of data to be sent is $O(k)$. Since it can be seen as a set of $k$ independent parallel prefix computations or a computation that carries data of size $k$.

Two main challenges of designing parallel algorithms with super-constant memory machines are partitioning the data among the machines and partitioning the processing into a set of synchronous rounds. In parallel algorithms, the length of the longest path in the dependency graph of the processes (a DAG) is a lower bound on the time complexity. Adapting this method to massively parallel algorithms requires grouping together the consecutive vertices, such that the length of the longest path becomes small enough.

\subsection{Linear Programming in MPC}
Since linear programming is $P$-complete~\cite{dobkin1979linear} even when the coefficients matrix has only non-negative entries~\cite{trevisan1998parallel}, it has no polylogarithmic time parallel algorithm, assuming $NC\neq P$. The $P$-completeness of linear programming also holds for constant factor approximations~\cite{serna1991approximating,greenlaw1995limits}.
In MRC and MPC models~\cite{mrc,mpc}, the linear programming has only been solved in fixed dimensions using $O(1)$ rounds~\cite{goodrich2011sortingI}.

\section{Dynamic Programs}
Dynamic programming is an important algorithm design tool based on recurrence relations that decreases the need for repeated computations by storing the results of previous computations (memoization) and computes the terms of the recurrence relation in a specific order that allows for better time complexity.

\Cref{fig:fill} shows several filling orders for dynamic programs. Here, $i\oplus j$ is the concatenation of the binary representations of $i$ and $j$, i.e., the bits of $i$ followed by the bits of $j$. Also, $i\otimes j$ interleaves the bits of $i$ and $j$, i.e., one bit of $i$, then one bit of $j$, and it repeats for the digits in other place values. Most of the orders in the first row of \Cref{fig:fill} are converted to the first one to be processed as blocks and the solution is chosen based on the values of the adjacent cells in one or both of the two directions. The second row shows three orderings that only differ based on their tie-breaking rules and the value of each cell depends on a subset of all the $4$ possible directions.

\begin{figure}[h]
\centering
\includegraphics[scale=0.95]{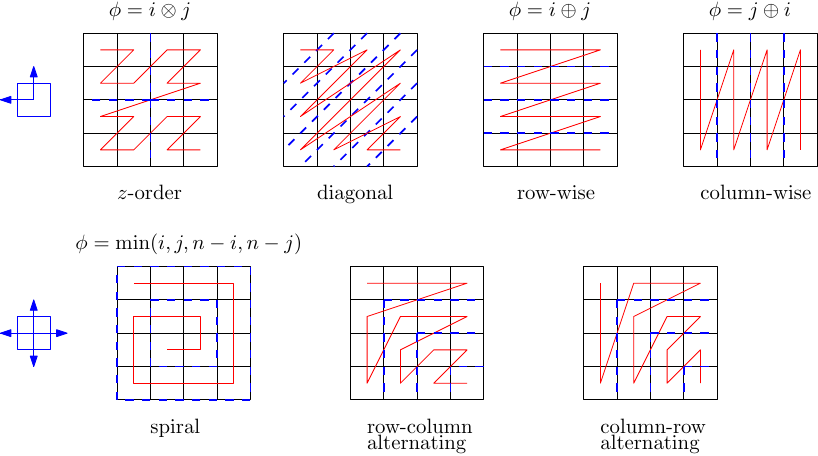}
\caption{Some of the orderings of filling dynamic programming tables.}\label{fig:fill}
\end{figure}

\subsection{Literature Review of The Rectangle Escpae Problem}\label{app:rep}
PCB routing problem with the application of routing rectangular chips in printed circuit boards (PCBs) has been extensively studied before~\cite{routing,obstacle,ma,kong,zar1,boundary,fixed,roy2016runaway}. A printed circuit board is a rectangular grid of pins. Ma, et al~\cite{ma} introduced the rectangle escape problem (REP) to solve the printed circuit board (PCB) bus routing problem, in which the pins of a rectangle should be routed together. They proved REP is NP-hard for $k\geq 3$. Later, Assadi, et al~\cite{zar1} proved that it is NP-hard for $k\geq 2$ and gave a lower bound 3/2 for its approximation factor.

Special cases of this problem have also been studied. Approximation algorithms for maximizing the number of rectangles for a given density, instead of minimizing the density are also known in special cases~\cite{boundary,fixed,chan,chanp,erl}.
An integrality gap of $1.77$ exists for the bidirectional case, where only escaping in the direction of two adjacent boundary edges is allowed~\cite{ahmadinejad2017rectangle}.

\end{document}